\def\01{\{0,1\}}
\newcommand{\ceil}[1]{\lceil{#1}\rceil}
\newcommand{\ket}[1]{|#1\rangle}
\newcommand{\bra}[1]{\langle#1|}
\newcommand{\ketbra}[2]{|#1\rangle\langle#2|}
\newcommand{\norm}[1]{{\left\|{#1}\right\|}}
\newcommand{\rank}{\mbox{\rm rank}}
\newcommand{\nrank}{\mbox{\rm nrank}}
\newcommand{\fool}{\mbox{\rm fool}}
\newcommand{\supp}{\mbox{\rm supp}}
\newcommand{\DISJ}{\mbox{\sc Disj}}
\newcommand{\EQ}{\mbox{\sc EQ}}
\newcommand{\GT}{\mbox{\sc GT}}
\newtheorem{theorem}{Theorem}
\newtheorem{lemma}{Lemma}
\newtheorem{corollary}{Corollary}
\newtheorem{claim}{Claim}
\begin{document}

\title{Fooling One-Sided Quantum Protocols}
\author{Hartmut Klauck\thanks{Research at the Centre
for Quantum Technologies is funded by the Singapore Ministry of Education
and the National Research Foundation.}
\\
CQT and Nanyang Technological University\\ Singapore\\
{\tt hklauck@gmail.com}
\and
Ronald de~Wolf\thanks{Partially supported by a Vidi grant from the Netherlands Organization for Scientific Research (NWO), and by the European Commission under the project QCS (Grant No.~255961).
Most of this work was done when RdW was visiting CQT, whose hospitality is gratefully acknowledged.}\\
CWI and University of Amsterdam\\ The Netherlands\\
{\tt rdewolf@cwi.nl}
}
\date{}
\maketitle

\begin{abstract}
We use the venerable ``fooling set'' method to prove new lower bounds
on the quantum communication complexity of various functions.
Let $f:X\times Y\rightarrow\01$ be a Boolean function, $\fool^1(f)$ its maximal fooling set size among 1-inputs,
$Q_1^*(f)$ its one-sided-error quantum communication complexity with prior entanglement, and
$NQ(f)$ its nondeterministic quantum communication complexity (without prior entanglement;
this model is trivial with shared randomness or entanglement).
Our main results are the following, where logs are to base~2:
\begin{itemize}
\item If the maximal fooling set is ``upper triangular'' (which is for instance the case for the equality, disjointness, and greater-than functions), then we have $Q_1^*(f)\geq\frac{1}{2}\log \fool^1(f) - \frac{1}{2}$,
which is essentially optimal by superdense coding.  No super-constant lower bound for equality seems to follow from earlier techniques.
\item For all $f$ we have $Q_1^*(f)\geq\frac{1}{4}\log \fool^1(f) - \frac{1}{2}$, which is optimal up to a factor of~2.
\item $NQ(f)\geq\frac{1}{2}\log \fool^1(f) + 1$.
We do not know if the factor~1/2 is needed in this result, but
it cannot be replaced by~1: we give an example where $NQ(f)\approx 0.613 \log \fool^1(f)$.
\end{itemize}
\end{abstract}

\section{Introduction}

\subsection{Background: fooling classical communication protocols}

Communication complexity~\cite{yao:distributive,kushilevitz&nisan:cc} is one of the most versatile and successful computational models we have,
and \emph{lower bounds} on communication complexity are one of the main sources of lower bounds in many other areas,
from circuits to data structures to data streams.
One of the simplest and most intuitive ways to prove lower bounds on communication protocols is by
exhibiting a large \emph{fooling set}, which was first done in~\cite{yao:distributive,lipton&sedgewick:vlsi}.
Suppose Alice and Bob want to compute some function $f:X\times Y\rightarrow\01$, given inputs $x\in X$ and $y\in Y$, respectively.
A 1-fooling set for~$f$ is a set $F=\{(x,y)\}$ of input pairs with the following properties:
\begin{quote}
(1) If $(x,y)\in F$ then $f(x,y)=1$\\
(2) If $(x,y),(x',y')\in F$ then $f(x,y')=0$ or $f(x',y)=0$
\end{quote}
By renaming some of Bob's inputs we can always assume without loss of generality that $F$ is of the form $\{(x,x)\}$.

For example, consider the $n$-bit equality function $\EQ$,
defined on $x,y\in\01^n$ as $\EQ(x,y)=1$ iff $x=y$.
This has a 1-fooling set $F=\{(x,x)\}$ of size $2^n$, since $\EQ(x,x)=1$ for all $x$ and $\EQ(x,y)=0$ for all distinct $x,y$.
The same fooling set also works for the $n$-bit greater-than function, which is defined as $\GT(x,y)=1$ iff $y\geq x$.
The $n$-bit disjointness function $\DISJ$, defined as $\DISJ(x,y)=1$ iff $|x\wedge y|=0$,
also has a 1-fooling set of size $2^n$, which can be seen as follows:
write its communication matrix as $\left(\begin{array}{ll}1 & 1\\ 1 & 0\end{array}\right)^{\otimes n}$,
and take the anti-diagonal as the 1-fooling set.  All entries on the anti-diagonal are~1 (giving the first property)
and all entries below the anti-diagonal are~0 (giving the second property).

Now consider for simplicity a deterministic protocol computing $f$.
Suppose the last bit of the conversation is the output bit, so both parties end up knowing the output.
Consider input pairs $(x,y),(x',y')\in F$.
For both inputs, the first property of the fooling set says that the correct output value is~1.
Suppose, by way of contradiction, that the conversation between Alice and Bob is the same on both input pairs.
If we switch input pair $(x,y)$ to $(x,y')$ then nothing changes from Alice's perspective
(neither her input nor the conversation changes), so the output will still be~1.
Similarly, if we switch $(x,y)$ to $(x',y)$ then the output won't change from Bob's perspective.
But by the second property of fooling sets, for at least one of $(x,y')$ and $(x',y)$, the correct output is~0!
Hence the conversations on inputs $(x,y)$ and $(x',y')$ must have been different.
Accordingly, the bigger our fooling set $F$ is,
the more distinct conversations we must allow and hence the more bits of communication are needed.

More precisely, the communication complexity is lower bounded by $\log|F|+1$.
A formal proof of this fact can be based on the notion of \emph{monochromatic rectangles}.
A rectangle is a set $R=A\times B$, where $A\subseteq X$ and $B\subseteq Y$.
Such a rectangle is \emph{1-monochromatic} if $f(x,y)=1$ for all $(x,y)\in R$.
Note that a rectangle containing 1-inputs $(x,y),(x',y')\in F$ cannot be 1-monochromatic,
because by the rectangle property it also contains $(x,y')$ and $(x',y)$,
at least one of which is a 0-input by the fooling set property.
Accordingly, if we want to include $F$ in a set of 1-rectangles, we need a separate 1-rectangle for each
element of $F$, and hence need at least $|F|$ different rectangles.
It is well-known that a deterministic $c$-bit communication protocol induces a partition
of the set of all 1-inputs into $2^{c-1}$ 1-monochromatic rectangles, so the previous argument implies $2^{c-1}\geq|F|$;
equivalently $c\geq\log|F| +1$.
In fact even \emph{nondeterministic} communication complexity is lower bounded by $\log |F| + 1$,
since a $c$-bit nondeterministic protocol gives rise to a \emph{cover} (rather than partition) of
the set of all 1-inputs by $2^{c-1}$ 1-monochromatic rectangles, and we still need a separate rectangle for each element of $F$.

In contrast, a \emph{quantum} communication protocol does not naturally induce a partition or cover of the 1-inputs into rectangles%
\footnote{It can be viewed as approximately producing rectangles \emph{with signs}~\cite[Section~3]{klauck:qcclowerj}.},
so the above way of reasoning fails.
In fact, in contrast to the classical case, the number of monochromatic rectangles needed to partition
the 1-inputs does not provide a lower bound on exact quantum protocols,
as witnessed by the exponential separation in~\cite{BuhrmanCleveWigderson98}.
Nevertheless, in this paper we show how fooling sets can still be used to lower bound quantum communication complexity.
We do this in two settings: one-sided-error quantum protocols with unlimited prior entanglement
and nondeterministic quantum protocols without entanglement. These results also imply lower bound for quantum ``Las Vegas'' or ``zero-error'' protocols (i.e., quantum protocols that never err, but have probability $\leq 1/2$ of giving up without a result).

\subsection{Our results: fooling one-sided-error quantum protocols}

First, we study one-sided-error protocols: protocols that always output~0 on inputs $x,y$ where $f(x,y)=0$,
and that output~1 with probability at least 1/2 on inputs where $f(x,y)=1$. We start by getting an essentially optimal bound for the case of ``upper-triangular'' fooling sets.  We call a 1-fooling set $F=\{(x,x)\}$ upper-triangular if there is some total ordering `$\geq$' on the $x$'s such that $x>y$ implies $f(x,y)=0$.  In other words, the matrix $M$ with entries $M_{xy}=f(x,y)$ is~0 below the diagonal.  In Section~\ref{secqonesidedLB} we show that if $f$ has an upper-triangular 1-fooling set of size $N$, then
$$
Q_1^*(f)\geq\frac{1}{2}\log N - \frac{1}{2}.
$$
For example, the $n$-bit equality, disjointness, and greater-than functions all have upper-triangular 1-fooling sets of size $N=2^n$, and hence an $n/2-1/2$ lower bound on their one-sided-error complexity $Q_1^*(f)$.
We have $Q^*_1(f)\leq n/2+1$ for any Boolean function where $X\subseteq\01^n$, because superdense coding~\cite{superdense}
allows Alice to send 2 classical bits using one EPR-pair and one qubit of communication.
Hence the above result is essentially tight for the functions mentioned.

We can extend this to a slightly weaker result for all functions stated in terms of their (not necessarily upper-triangular) 1-fooling-set size:
$$
Q_1^*(f)\geq\frac{1}{4}\log \fool^1(f) - \frac{1}{2}.
$$
Surprisingly for such basic functions as equality and disjointness, these bounds were not known before.
While it is possible to use Razborov's technique~\cite{razborov:qdisj} combined with results about polynomial approximation
with very small error~\cite{bcwz:qerror} to show $Q^*_1(\DISJ)=\Omega(n)$, no super-constant lower bound was known for $Q^*_1(\EQ)$.
This gap in our knowledge was due to the fact that other existing lower bound methods cannot give good lower bounds for equality,
as we explain now.  General lower bound methods for quantum communication complexity
can be grouped into rank-based methods and methods based on approximation norms (in particular based on
the $\gamma_2$-norm \cite{linial&shraibman:ccj}).\footnote{Information-theoretic methods~\cite{jrs:disjointness} have also been used to lower bound quantum communication complexity. However, the notion is defined there for internal information cost, and in this case the information cost for equality is $O(1)$, even for classical protocols without error~\cite[Proposition 3.21]{Brav:interactive}.}
The linearity of norms makes it possible to prove lower bounds for quantum
protocols in which Alice and Bob share prior entanglement.
Rank-based methods, however, do not seem to directly apply to protocols with entanglement: in the case of exact quantum protocols a direct sum-based construction in~\cite{buhrman&wolf:qcclower}
shows that the logarithm of the rank is a lower bound even in the presence of entanglement.\footnote{Footnote~2
of~\cite{buhrman&wolf:qcclower} claims such a bound for \emph{zero-error} quantum protocols for equality and disjointness without proof,
but in retrospect they didn't seem to have a proof of this.}
In the case of two-sided error and entanglement, Lee and Shraibman \cite{lee&shraibman:apprank} show
that the approximation rank yields lower bounds by relating it to the $\gamma_2$-norm.
Since the communication matrix of $\EQ$ is the identity matrix $I$, and $\gamma_2(I)=O(1)$ for $I$ of any size, there is no hope to use a connection between a one-sided-error version of approximation rank and the $\gamma_2$-norm to establish a large lower bound on $Q^*_1(\EQ)$.
Whether a one-sided-error version of approximation rank gives lower bounds for $Q^*_1$ remains open,
but we note that the construction in~\cite{lee&shraibman:apprank} cannot be adapted to the one-sided-error scenario.

So neither of the two main approaches to quantum communication complexity
lower bounds provides us with a good lower bound for $Q^*_1(\EQ)$. Hence in this paper we take a different approach. We first simulate a quantum protocol with entanglement by a game without communication, in which Alice and Bob share entanglement, and they need to compute a function $f$ conditioned on postselection on their local measurements. This approach itself is not new, and can for instance be used to show that the $\gamma_2$-norm is a lower bound, see~\cite{lee&shraibman:survey}. We then analyze the impact of Alice and Bob's measurements on the single entangled state used in the game. The one-sided-error requirement places strong constraints on those measurements, which we exploit to derive our lower bound in terms of fooling sets.

In a quantum \emph{Las Vegas} protocol Alice and Bob compute a function $f$ without error, but they are allowed to give up without a result with probability 1/2. The quantum Las Vegas communication complexity with entanglement $Q^*_0(f)$ is the minimum worst-case communication of any protocol that computes $f$ under these requirements.\footnote{It is possible to define Las Vegas protocols as protocols that never err and place bounds on \emph{expected} communication. The corresponding complexity measure is always larger or equal to the one considered here, and is smaller than 2 times our measure.}
Quantum Las Vegas protocols were investigated in~\cite{bcwz:qerror,klauck:qpcom,wolf:nqj} in the case where no prior entanglement is available.
Since $Q^*_0(f)\geq\max\{Q^*_1(f), Q^*_1(\neg f)\}$ we immediately get large lower bounds
on the quantum Las Vegas complexity of $\DISJ$ and $\EQ$, and also the following general lower bound:
$$
Q_0^*(f)\geq\frac{1}{4}\log \fool(f) - \frac{1}{2},
$$
where $\fool(f)$ is the standard maximum fooling set size, i.e., the maximum over the largest 1-fooling set and 0-fooling set.

\subsection{Our results: fooling nondeterministic quantum protocols}

As a second main result, just like in the classical world fooling sets lower bound \emph{nondeterministic} protocols,
we show here that they also lower bound nondeterministic \emph{quantum} protocols.
For our purposes, we can define a nondeterministic protocol (quantum as well as classical)
for a Boolean function $f$ as one that has positive acceptance probability on input $x,y$ iff $f(x,y)=1$.
In other words, this is the unbounded-error version of the one-sided-error model: the requirement of acceptance probability~0 on 0-inputs remains, but the requirement of \emph{large} acceptance probability on 1-inputs is relaxed to \emph{positive} acceptance probability on 1-inputs.%
\footnote{Nondeterministic communication complexity (classical as well as quantum) can be exponentially less than one-sided-error communication complexity, even if the latter is assisted by unlimited prior entanglement.  The negation of the disjointness function is an example of this.}
The quantum version of this model was introduced in~\cite{wolf:nqj}, which also exhibits a total function
with an exponential separation between quantum and classical nondeterministic communication complexities.

Note that allowing unlimited prior entanglement trivializes the nondeterministic model,
for the same reason that unlimited shared randomness trivializes it in the classical case:
Alice and Bob can share a random variable $r$ uniformly distributed over the set $X$ of Alice's
inputs; Alice sends a bit indicating whether $x=r$; if `yes' then Bob outputs $f(r,y)=f(x,y)$,
and if `no' then he outputs~0. Hence if we were to allow unlimited prior randomness or entanglement,
any function has nondeterministic communication complexity at most~1.
Accordingly, we will study nondeterministic protocols which don't share anything at the start.
In Section~\ref{secqndetLB} we show the following lower bound on nondeterministic quantum communication
complexity in terms of fooling~sets:
$$
NQ(f)\geq\frac{1}{2}\log \fool^1(f) + 1.
$$
We do not know if the factor~1/2 is needed in this result, but it cannot be replaced by~1:
in Section~\ref{secqndetLB} we give an example of a function where $NQ(f)\leq \frac{\log 3}{\log 6}\log \fool^1(f)+1$,
where $\log 3/\log 6\approx0.613$.

\section{Lower bound for one-sided bounded-error quantum protocols}\label{secqonesidedLB}

We assume familiarity with communication complexity.
See~\cite{kushilevitz&nisan:cc} for more details about classical communication complexity
and~\cite{wolf:qccsurvey} for quantum communication complexity.
Our key lemma is based on a reasonably well-known trick to replace quantum communication
by the guessing of twice as many classical bits:

\begin{lemma}\label{lemtonocom}
Suppose there is a quantum protocol $P$ with inputs from $X\times Y$ and output in $\01$,
that uses some fixed starting state (possibly entangled) and $q$ qubits of communication,
and where a measurement of the last qubit on the channel gives the output.
Then there exists another quantum protocol $Q$ with a fixed starting state
and no communication at all, where Alice outputs $a\in\01$ and Bob outputs $b\in\01$, such that
$$
\mbox{for all inputs }x,y: \Pr[Q~\mbox{outputs }a=b=1]=2^{-2q}\Pr[P~\mbox{outputs~1}].
$$
\end{lemma}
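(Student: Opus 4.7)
The plan is to use quantum teleportation to replace each of the $q$ communicated qubits, and then to eliminate the two classical bits that teleportation normally needs by having the receiver always apply the identity Pauli correction, as if the sender's Bell-measurement outcome had been $(0,0)$. When this ``guess'' happens to be correct at every teleportation (probability $(1/4)^q=2^{-2q}$), the simulation of $P$ is exact; Alice and Bob will output $a=b=1$ only in that case.

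Concretely, I would build $Q$ as follows. Alice and Bob start from the state used by $P$, augmented with $q$ fresh EPR pairs (one per qubit of communication in $P$), and they simulate $P$ locally. Whenever $P$ would transmit a qubit, the sender performs a Bell measurement on that qubit together with her half of the corresponding EPR pair, obtaining two bits, and the receiver treats his half of the EPR pair as the arriving qubit and applies \emph{no} correction. No bits are ever exchanged, so $Q$ uses zero communication. Assume without loss of generality that the output qubit ends up on Bob's side at the end of $P$; Bob measures the corresponding register to obtain a bit $o$. Declare $a=1$ iff every one of Alice's Bell outcomes equals $(0,0)$, and $b=1$ iff $o=1$ and every one of Bob's Bell outcomes equals $(0,0)$.

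The probability calculation rests on the teleportation identity: for any (possibly entangled) input qubit, the sender's Bell-measurement outcome is uniform on $\{0,1\}^2$, and conditioned on an outcome $(m_1,m_2)$ the receiver's register ends up in the state produced by applying $X^{m_2}Z^{m_1}$. In particular, conditioned on outcome $(0,0)$ no correction is needed and the global state just after the teleportation is exactly the one $P$ produces at the corresponding point. Iterating this across the $q$ teleportations, conditioning on all $q$ outcomes being $(0,0)$ makes the state on which Bob performs his final measurement match that of $P$, so $o$ has the same distribution as $P$'s output. The $q$ conditioning events contribute independent factors of $1/4$, yielding $\Pr[a=b=1]=2^{-2q}\Pr[P\text{ outputs }1]$.

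The main point to be careful about is the inductive factorization: the Bell outcomes of later teleportations are not unconditionally independent of earlier ones, since the intervening local unitaries act on states that depend on the earlier outcomes. The key observation is that, conditioned on all previous outcomes being $(0,0)$, the state entering the next teleportation is exactly what it would be in $P$; the teleportation identity then applies afresh and peels off a factor of $1/4$, letting us chain the $q$ steps into the clean product $2^{-2q}$.
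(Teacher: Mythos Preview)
Your proof is correct and follows essentially the same approach as the paper: replace each communicated qubit by teleportation using a fresh EPR pair, have the receiver always apply the identity correction, and declare success only when all Bell outcomes are $(0,0)$ and the simulated protocol outputs~1. Your inductive treatment of the conditioning is in fact a bit more careful than the paper's, which simply asserts that the all-$(0,0)$ event and $P$'s output are independent.
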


\begin{proof}
We assume without loss of generality that $P$ communicates \emph{exactly} $q$ qubits on all possible inputs.
By the well-known teleportation primitive~\cite{teleporting},
we can replace each qubit of communication in $P$ by the use of one additional EPR-pair
and two classical bits of communication.
These 2 bits are the outcome of a measurement by the sending party,
and indicate which of the 4 Pauli matrices the receiving party has to apply on their end of
the EPR-pair in order to obtain the qubit that the sender wanted to send.
If the bits happen to be~00 (which happens with probability 1/4),
then the right Pauli is the identity matrix, so then they don't need to do anything.
Call the resulting $2q$-bit protocol~$P_{clas}$.

Protocol~$Q$ is now as follows.
Alice and Bob run protocol $P_{clas}$ \emph{assuming} all messages are 0-bits (so they don't communicate anything).
Alice checks if all her teleportation measurements gave outcome~00. If not then she outputs $a=0$;
if yes then she outputs $P_{clas}$'s output if she was the one supposed to output that,
and otherwise she outputs $a=1$.
Bob does the same from his end, outputting $b\in\01$.
Note that $a=b=1$ iff all $q$ teleportation measurements gave outcome~00 \emph{and}
the output of $P$ was~1. The first event happens with probability $4^{-q}$ and the second
event with $\Pr[P~\mbox{outputs~1}]$.
Since these two events are independent we can multiply their probabilities to obtain the lemma.
\end{proof}

Note that the starting state of the new protocol~$Q$ is the starting state of the original protocol~$P$, augmented with an additional $q$ EPR-pairs.
Using the above lemma we can prove an essentially optimal lower bound in terms of upper-triangular 1-fooling sets:

\begin{theorem}\label{thq1fool}
If $f:X\times Y\rightarrow\01$ has an upper-triangular 1-fooling set of size $N$,
then $$
Q_1^*(f)\geq\frac{1}{2}\log N - \frac{1}{2}.
$$
\end{theorem}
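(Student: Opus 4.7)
My plan is to combine Lemma \ref{lemtonocom} with a structural argument that turns the upper-triangular zero pattern of the acceptance matrix into an operator inequality. First I would apply Lemma \ref{lemtonocom} to a one-sided-error protocol $P$ with $q$ qubits of communication, obtaining a no-communication protocol with Alice's POVM element $A_{x_i}$, Bob's POVM element $B_{x_j}$, and a shared pure state $|\psi\rangle$. Writing $M_{ij}:=\langle\psi|A_{x_i}\otimes B_{x_j}|\psi\rangle$, the one-sided-error hypothesis together with the upper-triangular fooling set $F=\{(x_i,x_i)\}_{i=1}^N$ yields $M_{ii}\geq 2^{-2q-1}$ for every $i$, and $M_{ij}=0$ for all $i>j$. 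The theorem then reduces to proving $\sum_i M_{ii}\leq 1$, since this together with the diagonal lower bound gives $N\cdot 2^{-2q-1}\leq 1$, i.e., $q\geq\tfrac{1}{2}\log N-\tfrac{1}{2}$.

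To extract the needed structure from the vanishing $M_{ij}$'s, note that $M_{ij}=\|(\sqrt{A_{x_i}}\otimes\sqrt{B_{x_j}})|\psi\rangle\|^2=0$ for $i>j$ implies that the vector $|v_j\rangle:=(I\otimes\sqrt{B_{x_j}})|\psi\rangle$ is annihilated by $\sqrt{A_{x_i}}\otimes I$ whenever $i>j$. Expanding $|v_j\rangle$ in Schmidt form then shows $|v_j\rangle\in L_j\otimes\mathcal{H}_B$, where $L_j:=\bigcap_{i>j}\ker(A_{x_i})$. The subspaces $L_1\subseteq L_2\subseteq\cdots\subseteq L_N=\mathcal{H}_A$ form a nested chain, so I would pass to the orthogonal increments $\tilde L_j:=L_j\ominus L_{j-1}$ (with $L_0:=\{0\}$) and their projections $\tilde\Pi_j$, which decompose $\mathcal{H}_A$ into mutually orthogonal pieces.

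Two observations then drive the calculation. First, since $L_{j-1}\subseteq\ker(A_{x_j})$, the operator $A_{x_j}$ vanishes on $L_{j-1}$, so when restricted to $L_j$ it sees only the $\tilde L_j$-component of $|v_j\rangle$; combined with $A_{x_j}\leq I$ this gives the bound $M_{jj}\leq\langle\psi|\tilde\Pi_j\otimes B_{x_j}|\psi\rangle$. Second, the mutual orthogonality of the $\tilde\Pi_j$'s on Alice's side yields the operator inequality $\sum_j\tilde\Pi_j\otimes B_{x_j}\leq I$: for any unit vector, decompose along $\bigoplus_j\tilde L_j$ and apply $B_{x_j}\leq I$ summand by summand, with cross-terms vanishing by orthogonality. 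Summing the first bound over $j$ and invoking the second gives $\sum_j M_{jj}\leq 1$, completing the proof.

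I expect the main obstacle to be choosing the right decomposition. A naive attempt to bound the operator $\sum_j A_{x_j}\otimes B_{x_j}$ directly breaks down because the POVM elements can overlap arbitrarily; the fooling-set hypothesis is precisely what allows one to replace $A_{x_j}$ by the orthogonal projection $\tilde\Pi_j$ at a harmless cost, after which the mutual orthogonality of the $\tilde\Pi_j$'s produces a clean operator inequality that bounds the whole sum by $1$.
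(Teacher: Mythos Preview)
Your proposal is correct. The reduction via Lemma~\ref{lemtonocom} and the target inequality $\sum_j M_{jj}\le 1$ are exactly what the paper sets up, and your verification of Step~5 (using $L_{j-1}\subseteq\ker A_{x_j}$ and $A_{x_j}\le I$) and Step~6 (using $\sum_j\tilde\Pi_j\le I$ and $B_{x_j}\le I$) goes through without difficulty.

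Where you differ from the paper is in how you establish $\sum_j M_{jj}\le 1$. The paper proves this by induction on $N$ (Claim~\ref{claim:key}): at the top level it splits $\ket{w}$ as $\ket{w_1}+\ket{w_2}$ according to the projection onto $\supp(A_{N+1})$, handles the $(N{+}1)$st term on $\ket{w_1}$ directly, shows that $\ket{w_2}$ (which lies in $\ker(A_{N+1})\otimes\mathcal H_B$) still satisfies the vanishing hypotheses for indices in $[N]$, and recurses. If one unrolls that induction, the pieces $\ket{w_1}$ produced at successive levels live precisely in $\tilde L_N\otimes\mathcal H_B,\,\tilde L_{N-1}\otimes\mathcal H_B,\ldots$ --- i.e.\ the paper is implicitly building the same filtration $L_1\subseteq\cdots\subseteq L_N$ that you write down explicitly. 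Your approach packages the whole argument into a single operator inequality $\sum_j\tilde\Pi_j\otimes B_{x_j}\le I$, which is cleaner and avoids the inductive bookkeeping; the paper's inductive proof is slightly more hands-on but requires no global naming of the chain. Substantively the two arguments coincide.
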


\begin{proof}
We can assume without loss of generality that the fooling set is of the form $\{(x,x):x\in[N]\}$, and $f(x,y)=0$ whenever $x>y$.
Let $q=Q_1^*(f)$ and let $P$ be a $q$-qubit entanglement-assisted protocol for~$f$.
Apply Lemma~\ref{lemtonocom} to this protocol to obtain a new protocol $Q$ without communication,
where Alice outputs $a\in\01$, Bob outputs $b\in\01$, satisfying
\begin{quote}
$\Pr[a=b=1]\geq 2^{-2q-1}$ on inputs $(x,x)\in F$\\
$\Pr[a=b=1]=0$ on inputs $x>y$
\end{quote}
Let $\ket{\psi}$ be the entangled starting state of protocol $Q$, which we assume to be pure without loss of generality.
On input $x$, Alice applies a POVM measurement with operators $A_x,I-A_x$ corresponding to outputs 1 and 0, respectively.  Similarly Bob uses POVM elements $B_y,I-B_y$.
The following technical claim is the core of the proof:

\begin{claim}\label{claim:key}
Let $\ket{w}$ be a bipartite state such that for all $x,y\in[N]$ satisfying $x>y$, we have $\bra{w}A_x\otimes B_y\ket{w}=0$.
Then 
$$
\sum_{x\in[N]} \bra{w}A_x\otimes B_x\ket{w} \leq \norm{w}^2.
$$
\end{claim}

\begin{proof}
The proof is by induction on $N$.
The base case $N=1$ follows from the Cauchy-Schwarz inequality and the fact that $A_x\otimes B_x$ has operator norm $\leq 1$.

For the inductive step: assume the claim holds for $N$, and now let $x$ range over $[N+1]$.
Fix some bipartite state $\ket{w}$ such that 
\begin{quote}
(*)~for all $x,y\in[N+1]$ satisfying $x>y$, we have $\bra{w}A_x\otimes B_y\ket{w}=0$. 
\end{quote}
Let $A_{N+1}=\sum_i\alpha_i\ketbra{a_i}{a_i}$, with $\alpha_i\in(0,1]$, be the spectral decomposition of POVM element $A_{N+1}$.
Let $\supp(A_{N+1})=\sum_i\ketbra{a_i}{a_i}$ denote the projection on the support of $A_{N+1}$.
Define $\ket{w_1} = (\supp(A_{N+1}) \otimes I) \ket{w}$, and $\ket{w_2}=\ket{w}-\ket{w_1}$.  
For $y\in[N]$, let $B_y=\sum_j\beta_j\ketbra{b_j}{b_j}$, with $\beta_j\in(0,1]$, be the spectral decomposition of POVM element $B_y$.
By~(*) we have 
$$
0=\bra{w}A_{N+1}\otimes B_y\ket{w}=\sum_{i,j}\alpha_i\beta_j\left|\bra{w}\cdot \ket{a_i}\otimes \ket{b_j}\right|^2.
$$
This means that $\ket{w}$ is orthogonal to all eigenvectors $\ket{a_i}\otimes\ket{b_j}$ of $A_{N+1}\otimes B_y$, which in turn implies 
\begin{quote}
(**)~for all $y\in[N]$, $(\supp(A_{N+1})\otimes B_y)\ket{w}$ is the 0-vector.
\end{quote}
Write 
\begin{equation}\label{eq:sumtwoparts}
\sum_{x\in[N+1]} \bra{w}A_x\otimes B_x\ket{w} = \bra{w}A_{N+1}\otimes B_{N+1}\ket{w} + \sum_{x\in[N]} \bra{w}A_x\otimes B_x\ket{w}.
\end{equation}
Since $(A_{N+1}\otimes I)\ket{w_2}=0$ by definition of~$\ket{w_2}$, the first term on the right-hand side equals $\bra{w_1}A_{N+1}\otimes B_{N+1}\ket{w_1}$, which is $\leq\norm{w_1}^2$ by the base case.

For the second term, note that for all (not necessarily distinct) $x,y\in[N]$, we have 
$$
A_x\otimes B_y\ket{w_1}=(A_x\otimes B_y)(\supp(A_{N+1})\otimes I)\ket{w}=(A_x\otimes I)(\supp(A_{N+1})\otimes B_y)\ket{w},
$$ 
which is~0 because $(\supp(A_{N+1})\otimes B_y)\ket{w}=0$ by~(**).  
Thus we have $A_x\otimes B_y\ket{w}=A_x\otimes B_y\ket{w_2}$, which by~(*) also implies  that for all $x,y\in[N]$ with $x>y$ we have $\bra{w_2}A_x\otimes B_y\ket{w_2}=0$. Now the second term on the right-hand side of~(\ref{eq:sumtwoparts}) equals 
$$
\sum_{x\in[N]} \bra{w_2}A_x\otimes B_x\ket{w_2},
$$ 
which is $\leq\norm{w_2}^2$ by the induction hypothesis.
Since $\ket{w_1}$ and $\ket{w_2}$ are orthogonal, the two terms on the right-hand side of~(\ref{eq:sumtwoparts}) together are at most $\norm{w_1}^2 + \norm{w_2}^2 = \norm{w}^2$. This concludes the inductive step, and hence the proof of the claim.
\end{proof}

Applying Claim~\ref{claim:key} with the actual entangled state $\ket{\psi}$ used by protocol $Q$, we obtain
$$
N2^{-2q-1}\leq \sum_{x\in[N]}\Pr[\mbox{outcome }A_x\otimes B_x\mbox{ when measuring }\ket{\psi}] =\sum_{x\in[N]}\bra{\psi}A_x\otimes B_x\ket{\psi}\leq \norm{\psi}^2=1.
$$
Rearranging gives the theorem.
\end{proof}

\begin{corollary}
The $n$-bit equality, disjointness and greater-than functions have~$Q_1^*(f)\geq n/2-1/2$.
\end{corollary}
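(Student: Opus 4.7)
The plan is to verify the hypothesis of Theorem~\ref{thq1fool} for each function by exhibiting an upper-triangular 1-fooling set of size $N=2^n$, and then directly apply the theorem, which yields $Q_1^*(f)\geq \frac{1}{2}\log 2^n - \frac{1}{2} = n/2-1/2$.

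For $\EQ$, I would take $F=\{(x,x):x\in\01^n\}$: the diagonal has all 1-entries, and under any total order on $\01^n$ the condition $x>y$ implies $x\neq y$ and hence $\EQ(x,y)=0$. For $\GT$, the same diagonal $F=\{(x,x)\}$ works using the usual integer order, since $\GT(x,x)=1$ (as $x\geq x$) while $x>y$ as integers gives $\GT(x,y)=0$. Both verifications are immediate from the definitions.

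The one verification that takes a moment of thought is $\DISJ$. Here the natural fooling set is the anti-diagonal $\{(x,\bar x):x\in\01^n\}$, which I first bring to the form required by Theorem~\ref{thq1fool} by renaming Bob's inputs via $y\mapsto\bar y$; the function becomes $g(x,y)=1$ iff $x\leq y$ bitwise, and the fooling set becomes $\{(x,x):x\in\01^n\}$. I now need a total order `$>$' for which $x>y$ implies $x\not\leq y$ bitwise. The obvious integer order does not work, but ordering $\01^n$ by Hamming weight (with ties broken lexicographically) does: if $|x|>|y|$, then $x$ must have a 1 in some position where $y$ is 0, so $x\not\leq y$; if $|x|=|y|$ but $x\neq y$, then again some coordinate has $x_i=1$ and $y_i=0$. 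This refines the reverse bitwise partial order into a total order, giving the upper-triangular property.

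The corollary is essentially a bookkeeping application of Theorem~\ref{thq1fool}, so there is no real obstacle; the only non-trivial micro-step is picking the correct refinement of the bitwise order for $\DISJ$, which is why I would write that case out explicitly and dispatch $\EQ$ and $\GT$ in a single sentence each.
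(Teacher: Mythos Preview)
Your approach is exactly the paper's: apply Theorem~\ref{thq1fool} after exhibiting an upper-triangular $1$-fooling set of size $2^n$ for each function. The paper's own proof is one sentence referring back to the constructions in the introduction.

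One correction, though: your claim that ``the obvious integer order does not work'' for $\DISJ$ is false. After your substitution $y\mapsto\bar y$, you need $x>y$ (in the chosen order) to imply $x\not\leq y$ bitwise. But bitwise $x\leq y$ trivially implies $x\leq y$ as integers, so the contrapositive gives exactly what you want from the standard integer order. This is in fact what the paper uses: its observation that ``all entries below the anti-diagonal are~0'' in $\left(\begin{smallmatrix}1&1\\1&0\end{smallmatrix}\right)^{\otimes n}$ is precisely the statement that $x>\bar y$ as integers forces $x\wedge y\neq 0$. Your Hamming-weight refinement is correct and also works, but it is unnecessary, and the sentence dismissing the integer order should be deleted.
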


\begin{proof}
These three functions all have upper-triangular 1-fooling sets of size $2^n$.
\end{proof}

Now we use a trick of combining two copies of the function to extend the result from the equality function to all functions, at the expense of a factor of~2 in the lower bound (we do not know if this loss is necessary). This is similar to the proof that fooling set size is at most quadratically bigger than rank~\cite[Lemma~4.15]{kushilevitz&nisan:cc}:

\begin{corollary}\label{cor:generalfoolinglb}
For all $f:X\times Y\rightarrow\01$ we have $Q_1^*(f)\geq \frac{1}{4}\log \fool^1(f) - \frac{1}{2}$.
\end{corollary}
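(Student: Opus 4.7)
The plan is to reduce the general case to the upper-triangular case of Theorem~\ref{thq1fool} by a standard product-function trick that squares the structure, just as in the proof that fooling-set size is at most $\rank^2$. Given any $f: X\times Y\to\01$ with 1-fooling set $F=\{(x_i,y_i):i\in[N]\}$ of size $N=\fool^1(f)$, I define a new function $g$ on pair-valued inputs $(X\times Y)\times(X\times Y)\to\01$ by
$$
g((a,b),(c,d)) \;=\; f(a,d)\wedge f(c,b),
$$
with Alice holding $(a,b)$ and Bob holding $(c,d)$. The goal is to show $g$ admits an upper-triangular 1-fooling set of size $N$, while $Q_1^*(g)\le 2Q_1^*(f)$ (with a slightly weaker acceptance probability), and then invoke the proof of Theorem~\ref{thq1fool}.

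First I will verify the fooling-set property of $g$. Consider the diagonal set $F'=\{\bigl((x_i,y_i),(x_i,y_i)\bigr):i\in[N]\}$. On each diagonal point $g$ evaluates to $f(x_i,y_i)\wedge f(x_i,y_i)=1$, while for $i\ne j$ the fooling-set property of $F$ forces at least one of $f(x_i,y_j),f(x_j,y_i)$ to be $0$, so $g((x_i,y_i),(x_j,y_j))=0$. Thus the restriction of $g$ to $F'\times F'$ is literally the $N\times N$ identity matrix, which under any total ordering of $[N]$ is upper-triangular.

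Next I will exhibit a quantum protocol for $g$ with communication $2Q_1^*(f)$. Let $P$ be an optimal $Q_1^*$ protocol for $f$. Alice and Bob run two independent instances of $P$: in the first, Alice uses her $a$ and Bob uses his $d$ to compute $f(a,d)$; in the second, they swap roles so that Bob uses his $c$ and Alice uses her $b$ to compute $f(c,b)$ (valid because quantum communication is symmetric with respect to which party goes first). They output the AND of the two outcomes. On 0-inputs of $g$ at least one instance deterministically outputs $0$, so the AND is $0$; on 1-inputs each instance accepts with probability $\ge 1/2$, so by independence of the entanglement and internal randomness the combined acceptance probability is $\ge 1/4$.

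Finally I apply the machinery behind Theorem~\ref{thq1fool} to this protocol for $g$, with the sole change being that the acceptance probability on 1-inputs is $1/4$ rather than $1/2$. Running Lemma~\ref{lemtonocom} on a $q$-qubit protocol with acceptance probability $\alpha$ yields $\Pr[a=b=1]\ge \alpha\cdot 2^{-2q}$ on fooling-set inputs; summing over the $N$ diagonal points and applying Claim~\ref{claim:key} to the starting entangled state gives $N\alpha\cdot 2^{-2q}\le 1$, i.e.\ $q\ge \tfrac{1}{2}\log N+\tfrac{1}{2}\log\alpha$. Plugging in $q=2Q_1^*(f)$ and $\alpha=1/4$ gives $2Q_1^*(f)\ge \tfrac{1}{2}\log N-1$, which rearranges to the stated bound. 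The main obstacle to watch out for is keeping the constant correct: the natural temptation is to amplify the $1/4$ acceptance probability back up to $1/2$ by running $O(1)$ extra copies, but that would inflate the communication by a worse-than-$2$ factor; the cleaner move is to carry the $\alpha=1/4$ directly through the proof of Theorem~\ref{thq1fool}, where it costs only an additive $-\tfrac{1}{2}$ in the final bound.
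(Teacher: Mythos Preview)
Your proposal is correct and follows essentially the same approach as the paper: define a product function $g$ via two copies of $f$ with roles swapped in one copy, observe that the fooling set for $f$ yields an identity submatrix for $g$, simulate $g$ with two runs of an optimal protocol for $f$ (acceptance $\geq 1/4$), and rerun the proof of Theorem~\ref{thq1fool} with the weaker acceptance constant. The paper's version differs only cosmetically in how the product function is indexed (it writes $g(xx',yy')=f(x,y)f(y',x')$), and it likewise carries the $1/4$ directly through rather than amplifying.
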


\begin{proof}
Define a new function $g:X^2\times Y^2\rightarrow\01$ by $g(xx',yy')=f(x,y)f(y',x')$. Note the reversed role of the two inputs in the second $f$.
 Alice and Bob can compute $g$ with one-sided error $p=1/4$ by separately computing $f(x,y)$ and $f^T(x',y')=f(y',x')$ with one-sided error $1/2$ each, and outputting the product of the two output bits.  This takes $Q^*_1(f)$ qubits of communication for each computation, so at most $2Q^*_1(f)$ in total.

Let $\{(x,x)\}$ be a 1-fooling set for $f$ of size $N=\fool^1(f)$.  Then it is easy to see that $\{(xx,xx)\}$ is a 1-fooling set for $g$, with the additional property that $g(xx,yy)=f(x,y)f(y,x)=0$ whenever $x\neq y$.  Hence the communication matrix for $g$ contains the $N\times N$ identity as a submatrix (i.e., the equality function).  The same proof as above gives a lower bound of $\frac{1}{2}\log N - 1$ for one-sided-error protocols for equality that accept 1-inputs with probability at least $1/4$ (instead of at least $1/2$ as above). Hence we have 
$\frac{1}{2}\log N - 1 \leq 2Q^*_1(f)$, which implies the statement.
\end{proof}

\section{Lower bound for nondeterministic quantum protocols}\label{secqndetLB}

In this section we study nondeterministic quantum protocols.
The following algebraic characterization of nondeterministic quantum communication complexity of $f$ is known.
The \emph{communication matrix} $M_f$ for $f$ is the $|X|\times|Y|$ Boolean matrix $M_f(x,y)=f(x,y)$.
A \emph{nondeterministic matrix} for $f$ is any real or complex matrix $M$ with the same support as $M_f$,
i.e., such that $M_{x,y}=0$ iff $f(x,y)=0$.
The \emph{nondeterministic rank} of $f$ (abbreviated to $\nrank(f)$) is the minimal rank (over the reals) among all such matrices.
\cite[Theorem~3.3]{wolf:nqj} shows that $NQ(f)=\ceil{\log\nrank(f)} + 1$.

The key to using fooling sets for nondeterministic quantum lower bounds is the following simple lemma:

\begin{lemma}
For every function $f:X\times Y\rightarrow\01$ we have $\nrank(f)^2\geq \fool^1(f)$.
\end{lemma}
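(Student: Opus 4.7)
The plan is to pick a minimum-rank nondeterministic matrix $M$ for $f$, factor it as $M=\sum_{k=1}^{r} u_k v_k^T$ with $r=\nrank(f)$, and associate to each element of the fooling set a rank-$1$ matrix in $\mathbb{R}^{r\times r}$ (equivalently, a vector in $\mathbb{R}^{r^2}$). The whole proof then reduces to checking that the $N=\fool^1(f)$ rank-$1$ matrices we obtain are linearly independent, which immediately gives $N\leq r^2$.

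Concretely, I would define $a_x,b_y\in\mathbb{R}^r$ by $a_x(k)=u_k(x)$ and $b_y(k)=v_k(y)$, so that $M_{xy}=a_x^T b_y$. After the WLOG renaming described in the introduction, the fooling set has the form $\{(x_i,x_i):i\in[N]\}$; abbreviate $a_i=a_{x_i}$ and $b_i=b_{x_i}$. Fooling-set property~(1) becomes $a_i^T b_i\neq 0$ for every $i$, and property~(2) becomes: for $i\neq j$, at least one of $a_i^T b_j$ and $a_j^T b_i$ is zero.

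The key step is to probe the matrices $a_i b_i^T$ using the \emph{crossed} outer products $b_j a_j^T$ via the Frobenius inner product. A one-line computation using cyclicity of the trace gives
\[
\langle b_j a_j^T,\, a_i b_i^T\rangle_F = \Tr(a_j b_j^T a_i b_i^T) = (a_i^T b_j)(a_j^T b_i).
\]
By fooling-set property~(2) this vanishes whenever $i\neq j$, while for $i=j$ it equals $(a_i^T b_i)^2>0$. Thus the $N\times N$ matrix $G$ with entries $G_{ij}=(a_i^T b_j)(a_j^T b_i)$ is diagonal with nonzero diagonal. Any relation $\sum_i c_i\,a_i b_i^T=0$, paired with $b_j a_j^T$, collapses to $c_j G_{jj}=0$, forcing $c_j=0$ for every~$j$. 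Hence the $N$ matrices $a_i b_i^T$ are linearly independent in the $r^2$-dimensional space $\mathbb{R}^{r\times r}$, yielding $N\leq r^2$ as required.

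I do not anticipate a real obstacle; the one nonobvious choice is to pair with the swapped product $b_j a_j^T$ rather than the naive self-pairing with $a_j b_j^T$ (which would produce $\langle a_i,a_j\rangle\langle b_i,b_j\rangle$ and fail to engage the fooling-set condition at all). Everything else is just bookkeeping with the rank factorization, and the argument is insensitive to whether the minimum rank is taken over $\mathbb{R}$ or $\mathbb{C}$ since $(a_i^T b_i)^2\neq 0$ over any field of characteristic zero.
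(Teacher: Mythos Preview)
Your proof is correct and is essentially the paper's argument unpacked: the paper takes $M\otimes M^T$ as a nondeterministic matrix for the auxiliary function $g(xx',yy')=f(x,y)f(y',x')$ and uses the chain $\nrank(f)^2=\rank(M\otimes M^T)\ge\nrank(g)\ge\nrank(I_N)=N$, and the submatrix of $M\otimes M^T$ on the fooling-set ``diagonal'' inputs $(x_ix_i,x_jx_j)$ is exactly your matrix $G$ with entries $(a_i^Tb_j)(a_j^Tb_i)$. Your rank-factorization and Frobenius-pairing presentation is more explicit, while the paper's tensor-product packaging is more compact, but the core computation is identical.
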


\begin{proof}
Let $N= \fool^1(f)$.
Like in the proof of Corollary~\ref{cor:generalfoolinglb}, define
$g(xx',yy')=f(x,y)\cdot f(y',x')$ and observe that the communication matrix of $g$ contains the $N\times N$ identity matrix $I_N$ as a submatrix.
If $M$ is a nondeterministic matrix for $f$, then $M\otimes M^T$ is a nondeterministic matrix for $g$.
Hence, choosing $M$ of minimal rank, we have 
$$
\nrank(f)^2=\rank(M)^2=\rank(M\otimes M^T)\geq \nrank(g)\geq \nrank(I_N)=N.
$$
%
\end{proof}

Taking logarithms and using that $NQ(f)=\ceil{\log\nrank(f)}+1$, we get

\begin{corollary}\label{coro_nq}
$NQ(f)\geq\frac{1}{2}\log \fool^1(f) + 1$.
\end{corollary}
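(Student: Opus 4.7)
The plan is to combine the lemma proved immediately above with the known algebraic characterization $NQ(f)=\lceil\log\nrank(f)\rceil+1$ from \cite{wolf:nqj} that was recalled at the start of the section. All of the substantive work has already been done inside the lemma; the corollary is essentially one line of bookkeeping.

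Concretely, first I would take the base-$2$ logarithm of the inequality $\nrank(f)^2\geq\fool^1(f)$ to get $2\log\nrank(f)\geq\log\fool^1(f)$, i.e.\ $\log\nrank(f)\geq\tfrac{1}{2}\log\fool^1(f)$. Next I would plug this into the identity $NQ(f)=\lceil\log\nrank(f)\rceil+1$ and use the trivial bound $\lceil z\rceil\geq z$ to deduce
$$
NQ(f)\;\geq\;\log\nrank(f)+1\;\geq\;\tfrac{1}{2}\log\fool^1(f)+1,
$$
which is exactly the stated inequality.

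There is really no obstacle at this step. The heavy lifting happened inside the lemma: one introduced the doubled function $g(xx',yy')=f(x,y)f(y',x')$ so that the $N\times N$ identity matrix sits as a submatrix of $M_g$, and then exploited the multiplicativity $\rank(M\otimes M^T)=\rank(M)^2$ together with the observation that $M\otimes M^T$ is itself a nondeterministic matrix for $g$. The more interesting question, which the authors explicitly leave open, is whether the constant $\tfrac{1}{2}$ in the corollary can be improved; the $\log 3/\log 6\approx 0.613$ example mentioned in the introduction shows that it cannot be removed entirely, so any improvement would have to stop short of the classical bound $\log\fool^1(f)+1$.
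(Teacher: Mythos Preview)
Your proof is correct and matches the paper's own argument exactly: the paper simply says ``Taking logarithms and using that $NQ(f)=\lceil\log\nrank(f)\rceil+1$,'' which is precisely the one-line bookkeeping you spell out.
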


For example for the equality function, this shows $NQ(f)\geq n/2+1$.
However, for the equality function we already knew $NQ(f)=n+1$ since obviously $\nrank(f)=2^n$~\cite{wolf:nqj}.
Hence it is natural to ask whether the constant 1/2 in the above corollary is needed.
We don't know, but at least we can show that it needs to be less than~1.
Specifically, we give an example where $NQ(f)\leq\frac{\log 3}{\log 6}\log \fool^1(f) + 1$,
where $\frac{\log 3}{\log 6}\approx 0.613$.
Consider the following $6\times 6$ matrix:
$$
\left(
\begin{array}{rrrrrr}
    1 &    1 &    0 &    0 &    0 &    1\\
    0 &    1 &    0 &   -1 &   -1 &    0\\
   -1 &    1 &    1 &    0 &   -1 &    0\\
   -1 &    0 &    1 &    1 &    0 &    0\\
    1 &    0 &    0 &    1 &    1 &    1\\
    0 &    1 &    1 &    1 &    0 &    1
\end{array}
\right).
$$
It is easy to see that this has rank~3.
The Boolean matrix obtained by dropping the minus signs corresponds to a
communication complexity function $g:[6]\times[6]\rightarrow\01$ with a 1-fooling set of size~6
(just take the diagonal).  Now let $f:X\times Y\rightarrow\01$ be the AND of $k$ independent instances
of $g$ (so $|X|=|Y|=6^k$). Because 1-fooling set size is multiplicative under taking ANDs,
we have $\fool^1(f)=6^k$.  On the other hand, taking the $k$-fold tensor product of
the above rank-3 matrix gives a nondeterministic matrix for $f$ of rank $3^k$.
Hence $NQ(f)=\ceil{\log\nrank(f)}+1\leq \frac{\log 3}{\log 6}\log \fool^1(f)+1\approx 0.613\log \fool^1(f)$.

A simpler but slightly weaker separation can be obtained from the 3-input non-equality function, where $X=Y=[3]$ and the function take value~0 when the inputs $x$ and $y$ are equal.
This has $\nrank=2$ vs $\fool^1=3$, hence taking a $k$-fold AND of this gives a function
$f:X\times Y\rightarrow\01$ with $|X|=|Y|=3^k$ and $\nrank(f)=2^k$ vs $\fool^1(f)=3^k$.
Taking logarithms, we have $NQ(f)\approx 0.63\log \fool^1(f)$.

\section{Conclusion and open problems}

Equality and disjointness are two of the most important functions considered in communication complexity. Prior to this paper no large lower bound on the one-sided error or Las Vegas quantum communication complexity of these functions was known for the case of protocols with prior entanglement. In particular, for $\EQ$ previous lower bound methods gave only a constant lower bound. We have shown that the fooling set method is applicable to one-sided-error protocols with entanglement, obtaining linear lower bounds for both functions.

It is interesting to note that for classical protocols there is essentially no need to consider fooling sets at all: the method is completely subsumed by the rectangle bound (i.e., bounding the size of the largest monochromatic rectangle under some distribution). However, the rectangle bound does not apply to quantum protocols with one-sided error and entanglement, nor to quantum nondeterministic communication complexity, which is why considering fooling sets is important here.

We conclude with some open problems:
\begin{itemize}
\item Can we improve the factor~1/4 in Corollary~\ref{cor:generalfoolinglb}?  We believe it should be 1/2, which is what we already showed here for upper-triangular 1-fooling sets.
\item
Another problem is to show that the factor~1/2 in Corollary~\ref{coro_nq} is necessary. It seems hard to come up with a matrix for which the nondeterministic rank is the square root of the rank, as would be required by a construction along the lines of our separation at the end of Section~\ref{secqndetLB}.
\item
One further goal would be to show that classical deterministic complexity $D(f)$ and quantum Las Vegas complexity $Q_0(f)$ are polynomially close for all total functions. This is a (possibly easier) variant of a general conjecture that for total functions quantum communication yields only polynomial improvements in communication complexity.
Proving a linear lower bound in terms of classical nondeterministic complexity (i.e., $Q_0(f)=\Omega(N(f))$)
would settle that, since it is known that $D(f)=O(N(f)^2)$. However, an example from~\cite{wolf:nqj} refutes that hope.
Let $f(x,y)=0$ if $|x\wedge y|=1$ and $f(x,y)=1$ otherwise.
This function as well as its complement have linear $N(f)$, but $NQ(f),NQ(\neg f)=O(\sqrt{n})$.
This does not, however, preclude a bound like $Q_0(f)=\Omega(\sqrt{N(f)})$, which would still achieve the above goal.
\end{itemize}

\subsection*{Acknowledgements}
We thank Harry Buhrman and Matthias Christandl (as well as an anonymous referee) for pointing out an error in an earlier version of this paper, which we corrected here.

\bibliographystyle{alpha}

\newcommand{\etalchar}[1]{$^{#1}$}

\end{document}